\theoremstyle{definition}
\newtheorem{dfn}{Definition}
\newtheorem{thm}[dfn]{Theorem}
\def\BibTeX{{\rm B\kern-.05em{\sc i\kern-.025em b}\kern-.08em
    T\kern-.1667em\lower.7ex\hbox{E}\kern-.125emX}}
\begin{document}

\title{A Fully Local Last-Generated Rule \\in a Blockchain
}

\author{\IEEEauthorblockN{Akira Sakurai}
\IEEEauthorblockA{\textit{Kyoto University} \\
Kyoto, Japan}
\and
\IEEEauthorblockN{Kazuyuki Shudo}
\IEEEauthorblockA{\textit{Kyoto University} \\
Kyoto, Japan}
}

\maketitle

\begin{abstract}
An effective method for suppressing intentional forks in a blockchain is the last-generated rule, which selects the most recent chain as the main chain in the event of a chain tie. This rule helps invalidate blocks that are withheld by adversaries for a certain period.
However, existing last-generated rules face an issue in that their applications to the system are not fully localized. In conservative cryptocurrency systems such as Bitcoin, it is desirable for methods to be applied in a fully local manner. 
In this paper, we propose a locally applicable last-generated rule. Our method is straightforward and is based on a relative time reference.
By conservatively setting the upper bound for the clock skews $\Delta_{O_i}$ to 200 s, our proposed method reduces the proportion $\gamma$ of honest miners following the attacker during chain ties by more than 40\% compared to existing local methods.
\end{abstract}

\begin{IEEEkeywords}
Blockchain, Intentional forks, Last-generated rule
\end{IEEEkeywords}

\section{Introduction}
Blockchain is the foundational technology used in decentralized currency systems, including Bitcoin \cite{bitcoin}. Blockchain systems are generally categorized into those that utilize proof-of-work and those that employ proof-of-stake. The focus of this study was on blockchains based on proof-of-work.

The security of a blockchain system is supported primarily by its incentive mechanisms. Specifically, the system is designed such that the most profitable strategy for each miner is to extend the longest chain. However, attacks that intentionally fork the chain to increase the adversary’s rewards have been identified \cite{majorityisnotenough} \cite{optimalSelfishMining} \cite{stubbornMining} \cite{Forkafterwithholding}. Such attacks undermine the system's consistency and lead to undesirable centralization of miners.

Various methods have been proposed to prevent intentional forks \cite{majorityisnotenough} \cite{oneweirdtricktostopselfishminers} \cite{PreventingBitcoinSelfishMiningUsingTransactionCreationTime} \cite{PublishorPerish} \cite{ZeroBlock} \cite{CounteringSelfishMininginBlockchains} \cite{PreventingSelfishMininginPublicBlockchainUsing} \cite{TFTstrategy} \cite{sakurai2024tiebreaking}\cite{Fortis}. One such method, addressed in this study, is the \textit{last-generated rule}. This is a tie-breaking rule that selects the most recently generated chain as the main chain when chains are in a chain tie. Here, a tie-breaking rule refers to a rule that determines the main chain when fork choice rules, such as the longest chain rule or GHOST \cite{GHOST}, cannot provide a unique decision. By applying the last-generated rule, it becomes possible to distinguish blocks held by adversaries for a certain period from others, effectively invalidating the adversary’s blocks. Unlike other countermeasures, the last-generated rule can be applied without requiring system-wide updates, such as hard forks or soft forks, or the need for strong synchrony, thereby making its implementation easy.

However, existing last-generated rules suffer from a limitation in that their application is not localized to each miner. These methods require trusted third parties \cite{oneweirdtricktostopselfishminers} or sharing new messages \cite{sakurai2024tiebreaking}. Conservative operations are preferred in currency systems such as Bitcoin. In such systems, it is desirable that new methods be localized.

In this paper, we propose a simple last-generated rule that can be applied in a fully local manner. Here, ``fully local'' means that no additional communication is required. Whereas existing last-generated rules rely on an absolute time standard, our method is based on a relative time standard. Specifically, our method assumes that there is a known upper bound for the clock skews between miners. On the basis of this assumption, the proposed last-generated rule rejects blocks with significant discrepancies between the local clock and the block's generation time instead of selecting the newer block. Furthermore, we demonstrate the superiority of our method over local methods through both theoretical analysis and simulation experiments.

\section{Related Work}
Here, we examine the existing last-generated rules, focusing on whether they are local or whether they function as last-generated rules.

Eyal et al. proposed Selfish Mining and, as a countermeasure, introduced the random rule, a tie-breaking rule that randomly determines the main chain during chain ties~\cite{oneweirdtricktostopselfishminers}. Under the random rule, the hashrate proportion $\gamma$ of honest miners who follow the adversary's chain during chain ties is 0.5. Their method is completely local and was actually implemented in Ethereum, which had employed proof-of-work at the time.

To further suppress $\gamma$ beyond the level achieved by the random rule, Heilman proposed a last-generated rule, which selects the most recently generated chain during chain ties~\cite{oneweirdtricktostopselfishminers}. The last-generated rule is more powerful than the random rule and reduces $\gamma$ closer to 0. However, his method requires a trusted third party, making it non-local. This dependency is a significant drawback for trustless blockchain systems.

Lee et al. proposed a last-generated rule that uses the transaction creation time as a new time standard \cite{PreventingBitcoinSelfishMiningUsingTransactionCreationTime}. However, this method has two major limitations. First, it requires embedding creation times in transactions, increasing the transaction size, and imposing a significant burden on both users and miners. Second, the time standard can be easily manipulated by adversaries, who can forge the transaction creation times, which is a critical flaw. Saad et al. and Reno et al. proposed similar last-generated rules that suffer from the same drawbacks \cite{CounteringSelfishMininginBlockchains} \cite{PreventingSelfishMininginPublicBlockchainUsing}.

Bi\c{c}er et al. proposed a method that combines the random rule with Heilman's last-generated rule, which does not rely on a trusted third party \cite{Fortis}. Specifically, this method probabilistically prioritizes the block with the more recent timestamp in the case of a chain tie, selecting it as part of the main chain. However, a significant drawback of this method is that it does not adequately consider the fact that an attacker can easily set a timestamp far in the future. If an attacker does this, the effectiveness of the method drops below that of the random rule.

Sakurai et al. proposed a last-generated rule that uses partial proof-of-work as a time standard \cite{sakurai2024tiebreaking}. Here, partial proof-of-work refers to block headers that, although weaker than full blocks, still possess a certain amount of proof-of-work. Their method requires the network to share messages containing a small amount of proof-of-work, which makes it non-local. This presents a practical challenge.

\section{Proposed Method}\label{proposedmethod}
The proposed method is a last-generated rule based on the assumption that there exists a known upper bound for the clock skews between miners. Under this assumption, when an honest miner successfully generates a block, miners can expect a certain level of accuracy in the timestamp included in the block. In other words, if a block's timestamp deviates sufficiently from its own clock, each miner can determine whthere the block was generated by an adversary. 

The proposed method is locally applicable, requiring only that each miner sets an upper bound for the block propagation time $\Delta_{B_i}$ and an upper bound for the clock skews between miners $\Delta_{O_i}$. These parameters can be configured locally, without requiring coordination with or reliance on other miners.

\subsection{Model}\label{model}
We make two assumptions regarding synchrony. First, we assume the existence of a known upper bound $\Delta_B$ for the block propagation time. Second, we assume the existence of a known upper bound $\Delta_O$ for the clock skews between miners. By definition, $\Delta_O$ satisfies
\begin{equation}
    \max_{i, j \in V} |CL_i - CL_j| \leq \Delta_O 
\end{equation}
where $V$ represents the set of miners, and $CL_i$ denotes the local clock time of miner $i$.

Next, we make several assumptions about the adversary. We assume that the adversary can immediately detect blocks generated by honest miners. Moreover, the adversary can transmit blocks to all miners with no time delay.

It is noteworthy that these assumptions do not hold for real-world networks. Specifically, the known upper bounds $\Delta_B$ and $\Delta_O$ are unknown. In Sections~\ref{proposedmethod} and~\ref{theoreticeval}, we describe the construction and evaluation of the proposed method using this model. However, in Section \ref{simulation}, we demonstrate the effectiveness of the method even in networks where these assumptions do not hold.

\subsection{Details}
Before discussing the details of the proposed method, we present the following fundamental theorem.
\begin{thm}\label{theorem1}
Let $t_B$ denote the timestamp of a block, and let $a_l$ represent the reception time of that block. If the following condition holds, then the creator of the block is an adversary:
\begin{equation}
    a_l - t_B < -\Delta_O \lor a_l - t_B > \Delta_O + \Delta_B. \label{corecondition}
\end{equation}
    
\end{thm}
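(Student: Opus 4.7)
The plan is to prove the contrapositive: assume the block's creator is honest and show that $-\Delta_O \le a_l - t_B \le \Delta_O + \Delta_B$, which negates~(\ref{corecondition}).

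First I would fix notation. Let $c$ be the creator and $\ell$ the receiving miner, with local clocks $CL_c$ and $CL_\ell$. Because $c$ is honest, the timestamp $t_B$ stamped into the block equals $CL_c$ at the moment of generation. Let $r$ be the real (wall-clock) time elapsed between generation and reception; by the block-propagation assumption we have $0 \le r \le \Delta_B$. Assuming clocks tick at the real-time rate (an implicit part of the model in Section~\ref{model}), $CL_\ell$ at reception equals $CL_\ell$ at generation plus $r$, and by definition $a_l = CL_\ell$ at reception.

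Next I would apply the clock-skew bound from the model. At the instant of generation, $|CL_c - CL_\ell| \le \Delta_O$, so $CL_\ell$ at that moment lies in $[t_B - \Delta_O,\; t_B + \Delta_O]$. Adding $r \in [0, \Delta_B]$ gives
\begin{equation}
    a_l \in [\, t_B - \Delta_O,\; t_B + \Delta_O + \Delta_B\,],
\end{equation}
which rearranges to $-\Delta_O \le a_l - t_B \le \Delta_O + \Delta_B$. This contradicts~(\ref{corecondition}), so any block satisfying~(\ref{corecondition}) cannot have been generated by an honest miner, which proves the claim.

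The only nontrivial step is the first one, namely pinning down what ``honest'' means for timestamping: an honest miner must stamp the block with its own local clock at generation time, and clocks must advance at the real-time rate. Once those conventions are made explicit, the remainder is a routine interval arithmetic argument combining the two synchrony bounds $\Delta_O$ and $\Delta_B$.
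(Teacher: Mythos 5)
Your proof is correct and follows essentially the same route as the paper's: both argue the contrapositive and bound $a_l - t_B$ within $[-\Delta_O,\ \Delta_O + \Delta_B]$ by combining the clock-skew bound with the propagation-delay bound. Your version is slightly more explicit about the implicit conventions (honest timestamps equal the creator's local clock at generation, clocks advance at real-time rate), whereas the paper simply identifies the two extremal cases, but the substance is identical.
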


\begin{proof}
We demonstrate the contrapositive. When an honest miner generates a block, the minimum value of $a_l - t_B$ occurs when the block creator's clock is $\Delta_O$ ahead of the receiving miner's clock, and the block is received simultaneously upon generation. In this case, $a_l - t_B$ equals $-\Delta_O$. Next, the maximum value of $a_l - t_B$ occurs when the receiving miner's clock is $\Delta_O$ ahead of the block creator's clock and the block is received $\Delta_B$ after its generation. In this scenario, $a_l - t_B$ equals $\Delta_O + \Delta_B$. Therefore, the following inequality holds:
\begin{equation}
-\Delta_O \leq a_l - t_B \leq \Delta_O + \Delta_B.
\end{equation}
\end{proof}

\begin{algorithm}
\caption{Fork choice rule using the proposed method for miner $i$}
\label{algousingproposedmethod}
\begin{algorithmic}[1]
\Procedure{getMainChain}{$\textit{C}_\textit{1}, \ldots, \textit{C}_\textit{k}$}
    \State $\textit{candidates} \gets \text{forkChoiceRule}(\textit{C}_\textit{1}, \ldots, \textit{C}_\textit{k})$
    
    \State $\textit{earliestTime} \gets \textit{C}_\textit{1}.\textit{arrivalTime}$
    \For{$\textit{C} \in \textit{candidates}$}
        \State $\textit{earliestTime} \gets \text{min}(\textit{earliestTime}, \textit{C}.\textit{arrivalTime})$
    \EndFor

    \State $\textit{prospects} \gets \emptyset$
    \While{$\exists C \in \textit{candidates} \text{ s.t. } C.\textit{arrivalTime} - \textit{earliestTime} > \textit{w}$}
      \State $\textit{prospects} \gets \textit{prospects} \cup \{C\}$
    \EndWhile
    
    \While{$(\exists C \in \textit{prospects} \text{ s.t. } (C.\textit{arrivalTime} - C.\textit{timestamp} < - \Delta_{O_i}) \lor (C.\textit{arrivalTime} - C.\textit{timestamp} >  \Delta_{O_i} + \Delta_{B_i})) \land |\textit{prospects}| \neq 0$}
      \State $\textit{prospects} \gets \textit{prospects} \setminus \{C\}$
    \EndWhile
    \If{$|\textit{prospects}| = 0$}
        \State Randomly choose the \textit{mainchain} from \textit{candidates}.
        \State \Return \textit{mainchain}
    \EndIf

    \State Randomly choose the \textit{mainchain} from \textit{prospects}.
    \State \Return \textit{mainchain}
\EndProcedure 
\end{algorithmic}
\end{algorithm}

Building on the aforementioned model and theorem, we present the proposed method. Algorithm~\ref{algousingproposedmethod} outlines the fork choice rule that uses the proposed method. Function \texttt{getMainChain} accepts an arbitrary number of chains as arguments and returns the main chain. The proposed method is a last-generated rule and, to prevent intentional chain ties against pre-generated blocks by honest miners imposes an acceptance window $w = \Delta_{B_i}$ for chains in a tie. By configuring the acceptance window $w$ in this manner, the impact of such attacks is sufficiently suppressed \cite{sakurai2024tiebreaking}. After restricting the chains in a tie in this manner (lines 3--10), each chain is evaluated to determine whether Condition \ref{corecondition} is satisfied. If Condition \ref{corecondition} is not satisfied for a chain, that chain is excluded from the main chain candidates (lines 11--13). If no candidates remain after exclusion, the main chain is randomly selected from the candidates prior to exclusion on the basis of Condition \ref{corecondition} (lines 14--17). Otherwise, the main chain is randomly selected from the remaining main chain candidates (lines 18 and 19).

Here, $\Delta_{B_i}$ represents the upper bound for the block propagation time and $\Delta_{O_i}$ denotes the upper bound for the clock skews, both independently set by each miner. Importantly, each miner can independently determine these values without coordinating with other miners. The settings for $\Delta_{O_i}$ and $\Delta_{B_i}$ can vary among miners. Furthermore, the proposed method only requires each miner to set $\Delta_{O_i}$ and $\Delta_{B_i}$. In this sense, the proposed method is locally applicable compared to the existing last-generated rules.

\section{Evaluation}
We evaluated the proposed method through both theoretical analysis and simulation experiments. Specifically, we focused on the total hashrate ratio $\gamma$ of honest miners who follow an adversary's chain when the adversary intentionally causes chain ties. We primarily compared the proposed method with the random rule, which is currently the most effective fully local method aside from the proposed one.

\subsection{Theoretical Analysis}\label{theoreticeval}
We conduct a theoretical analysis of the proposed method. For the theoretical analysis, we assumed that the assumptions outlined in Section\ref{model} hold true.
Several additional assumptions were made for the ease of analysis. First, Each miner sets $\Delta_{O_i} = \Delta_O$ and $\Delta_{B_i} = \Delta_B$. Next, we also assumed that the adversary is aware of the values of $\Delta_O$ and $\Delta_B$ used by the honest miners.

Under these assumptions, the following theorem holds.
\begin{thm}\label{theorem2}
On average, $\gamma$ satisfies
\begin{align}
  \gamma \leq \frac{1}{2} - \frac{1}{2}\exp\left(-\frac{2 \Delta_O + 3 \Delta_B}{T}\right).
\end{align}
where $T$ is the average block generation interval.
\end{thm}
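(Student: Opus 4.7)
The plan is to bound $\gamma$ by $\frac{1}{2}$ times the maximum, over honest miners, of the probability that the adversary's block survives the timestamp filter of Theorem~\ref{theorem1}, and then to bound this acceptance probability using the memoryless distribution of honest block inter-arrival times. To set up, suppose the adversary creates its block at real time $\tau_c$ with chosen timestamp $t_B$, let $T_h$ denote the interval until the next honest block is mined (so $T_h \sim \mathrm{Exp}(1/T)$ by the Poisson nature of proof-of-work), and let $\tau_r \geq \tau_c + T_h$ be the real release time. For an arbitrary honest miner $l$, write $\epsilon_l = CL_l(\tau_c) - CL_A(\tau_c) \in [-\Delta_O, \Delta_O]$ for the clock skew relative to the adversary, $d_l^h \in [0, \Delta_B]$ for the propagation delay of the honest block to $l$, $X = t_B - CL_A(\tau_c)$ for the amount by which the adversary sets the timestamp ahead of its own clock at creation, and $D = \tau_r - (\tau_c + T_h)$ for the release delay chosen after observing the honest block.

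First I would rewrite Theorem~\ref{theorem1}'s acceptance condition for miner $l$, namely $CL_l(\tau_r) - t_B \in [-\Delta_O, \Delta_O + \Delta_B]$, in terms of $\epsilon_l$, $T_h$, $D$, and $X$. Separately, the acceptance window $w = \Delta_B$ in Algorithm~\ref{algousingproposedmethod} forces $D \leq d_l^h + \Delta_B$ for the adversary's block to remain in miner $l$'s tie set. Taking the union over all feasible $D \in [0, d_l^h + \Delta_B]$ yields a single interval of $T_h$ values for which miner $l$ can be induced both to classify the adversary's block as tied and to pass it through the filter; a routine calculation reveals that this interval has length at most $2\Delta_O + 2\Delta_B + d_l^h \leq 2\Delta_O + 3\Delta_B$, independently of the adversary's choice of $X$.

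Next, since the density of an exponential random variable is decreasing, $P(T_h \in [a, a + L])$ is maximized at $a = 0$ and equals $1 - e^{-L/T}$. Hence for every honest miner $l$ and every adversary strategy, the probability that $l$ accepts the adversary's block is at most $1 - \exp(-(2\Delta_O + 3\Delta_B)/T)$. The contrapositive computation inside the proof of Theorem~\ref{theorem1} shows the honest block always passes the filter, so miner $l$ follows the adversary with conditional probability $\frac{1}{2}$ when the adversary's block passes (both chains survive to the random tie-break) and $0$ otherwise; averaging over honest miners weighted by hashrate then yields $\gamma \leq \frac{1}{2}\bigl(1 - \exp(-(2\Delta_O + 3\Delta_B)/T)\bigr)$. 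The hard part will be the case analysis in step two: verifying that the joint feasibility region in $T_h$ collapses to a single interval of length uniformly at most $2\Delta_O + 3\Delta_B$ regardless of the adversary's timestamp manipulation, and checking that nothing is lost by allowing the adversary to commit $X$ at creation while choosing $D$ adaptively after observing $T_h$.
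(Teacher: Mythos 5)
Your argument is correct and follows essentially the same route as the paper: both reduce the bound to the observation that the adversary's block can survive the timestamp filter at a given miner only if the next honest block arrives within a window of length $2\Delta_O + 3\Delta_B$ (tie-window slack $2\Delta_B$ plus timestamp-acceptance slack $2\Delta_O + \Delta_B$), apply the exponential inter-arrival bound $1 - \exp(-(2\Delta_O + 3\Delta_B)/T)$, and decompose $P_{win}$ into the random-tie case (contributing a factor $\tfrac{1}{2}$) and the honest-block-wins case. The only difference is one of presentation: the paper exhibits the single worst-case timestamp choice ($t_B$ set $\Delta_O + 2\Delta_B$ ahead of the target's clock), whereas you quantify over all choices of $X$ and $D$ and show the feasible set of $T_h$ is always an interval of length at most $2\Delta_O + 2\Delta_B + d_l^h \leq 2\Delta_O + 3\Delta_B$, which is a slightly more complete justification of the same step.
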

\begin{proof}
  Consider a scenario where the adversary generates a block and attempts to prevent an honest miner from meeting Condition~\ref{corecondition}. First, assume that another honest miner generates a block simultaneously with the attacker. After $2\Delta_B$, the acceptance window for the target honest miner will have closed. This is because it takes $\Delta_B$ for the block to reach the target honest miner, and an additional $w = \Delta_B$ for the acceptance window to end. Thus, the attacker must ensure that Condition~\ref{corecondition} is not met by the honest miner within $2\Delta_B$ after block generation. 

  Then, the attacker sets the block's timestamp to be $\Delta_O + 2 \Delta_B$ ahead of the target miner’s clock. Under these conditions, $2\Delta_O + 3\Delta_B$ after generating the block, the attacker’s block will satisfy Condition~\ref{corecondition}.

  As a result, the probability ($P_{win}$) that the honest miner’s block will become part of the main chain is:
\begin{align}
    P_{win} \geq &\frac{1}{2} \left(1 - \exp\left(-\frac{2 \Delta_O + 3 \Delta_B}{T}\right)\right) + \exp\left(-\frac{2 \Delta_O + 3 \Delta_B}{T}\right) \label{probhonestwin} \\
               =& \frac{1}{2} + \frac{1}{2} \exp\left(-\frac{2 \Delta_O + 3 \Delta_B}{T}\right)
\end{align}
  The left term on the left side of Inequality \ref{probhonestwin} represents the probability that an honest miner generates a block within $2\Delta_O + 3\Delta_B$ after the adversary generates a block, possibly causing all honest miners to choose the main chain randomly. The right-hand term represents the probability that an honest miner generates a block after $2\Delta_O + 3\Delta_B$ following the adversary's block generation, leading to all honest miners following and mining on the honest miner's block.

  Therefore, $\gamma$ satisfies
\begin{align}
  \gamma  &= 1 - P_{win} \\
          &\leq 1 - \left(\frac{1}{2} + \frac{1}{2}\exp\left(-\frac{2 \Delta_O + 3 \Delta_B}{T}\right)\right) \\
          &= \frac{1}{2} - \frac{1}{2}\exp\left(-\frac{2 \Delta_O + 3 \Delta_B}{T}\right).
\end{align}
\end{proof}

From Theorem~\ref{theorem2}, we can observe that the proposed method effectively suppresses intentional chain ties by adversaries compared with the random rule($\gamma = 1/2$), while still maintaining a high degree of locality. 

\subsection{Simulation Experiments}\label{simulation} 
Simulations were conducted to evaluate the proposed method. In the theoretical analysis, several assumptions were made that may not hold in real-world scenarios. For instance, whereas the theoretical analysis assumed no clock skews, in large-scale distributed systems such as Bitcoin, such skews among miners may have a significant impact. In this section, we assessed the performance of the proposed method in a more realistic and quantitative manner considering factors such as clock skews.

\subsubsection{Simulation Settings}

The simulator used in this study was event-driven and inspired by the blockchain network simulator SimBlock \cite{simblock}. The average block generation interval was set to 600 s. There were 1,000 honest miners, each with equal hashrate. Additionally, there was a single adversary whose hashrate was equal to the combined hashrate of all honest miners. We assumed that the block propagation times for both the adversary and honest miners were zero, which means that we did not consider the impact of forks by honest miners.

When the adversary successfully generated a block, it did not immediately publish but withheld it. Instead, when an honest miner generated a block, the adversary released its block to cause a chain tie. When the adversary successfully generated consecutive blocks, the height difference between the chains maintained by the honest miners and the adversary became two blocks. In this case, the adversary published its block as well.

The four primary parameters considered in this experiment were 
\begin{itemize} 
    \item The upper bound for the block propagation time set by each miner, $\Delta_{B_i}$ 
    \item The upper bound for the clock skews between miners set by each miner, $\Delta_{O_i}$ 
    \item The clock offset of each miner, $O_i$ 
    \item The clock offset of the adversary, $a_t$ 
\end{itemize} 
The clock offset $O_i$ for each miner followed a normal distribution with a mean of 0. In addition, considering the current statistics on Bitcoin's block propagation time and fork rates, the upper bound for the block propagation time $\Delta_{B_i}$ was conservatively set to 20 s \cite{statsofbitcoin}.

Next, we set $\Delta_{O_i}$. In blockchain networks that employ proof-of-work, it is difficult to estimate the upper bound for the clock skews $\Delta_{O}$ between nodes. Therefore, we examined Ethereum \cite{ethereum}, a similar large-scale blockchain network. Currently, Ethereum uses a slot-based proof-of-stake system \cite{buterin2020combiningghostcasper}. In Ethereum, the effect of clock skews between nodes manifests as forks in the blockchain. That means that it is possible to estimate the upper bound for the clock skews $\Delta_{O}$ between the nodes by observing the number of forks. In September 2024, the Ethereum fork rate was approximately 0.3\% \cite{etherscan_blocks_forked}. Given that each slot in Ethereum is 12 s, this suggests that the upper bound for the clock skews among most block producers is within a few seconds. Therefore, in the simulation experiments conducted in this study, we initially set $\Delta_{O_i}$ to 20 s. Subsequently, we conservatively increased $\Delta_{O_i}$ to 200 s.

\subsubsection{Simulation Result}
\begin{figure}[t]
    \centering
    \includegraphics[width=1\linewidth]{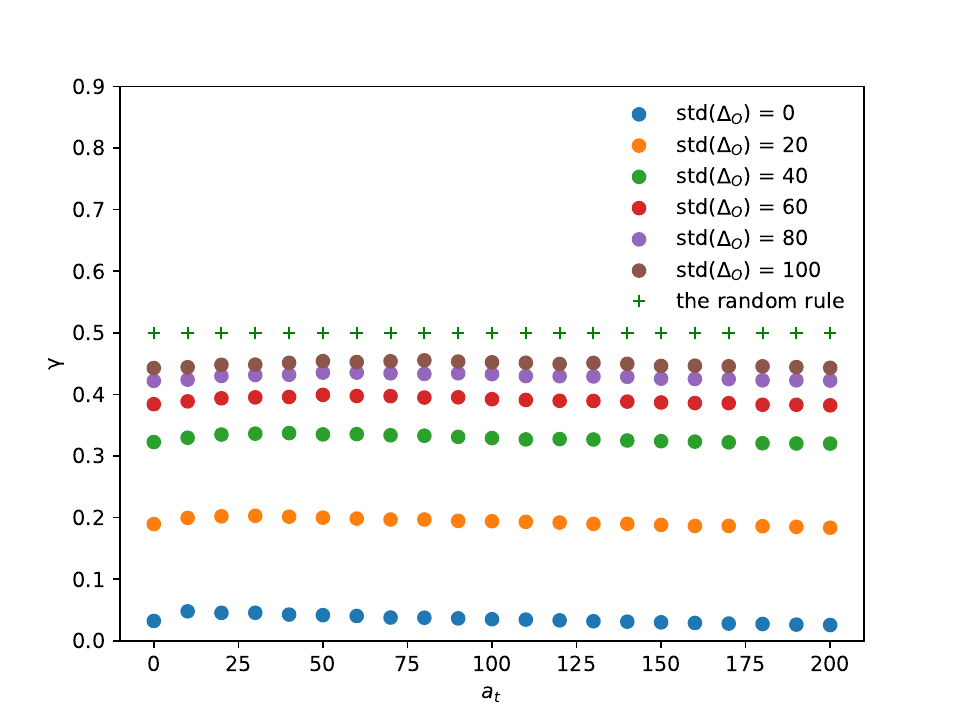}
    \caption{Simulation results for $\gamma$ when $\Delta_{O_i} = 20$ s.}
    \label{doi20}
\end{figure}
\begin{figure}[t]
    \centering
    \includegraphics[width=1\linewidth]{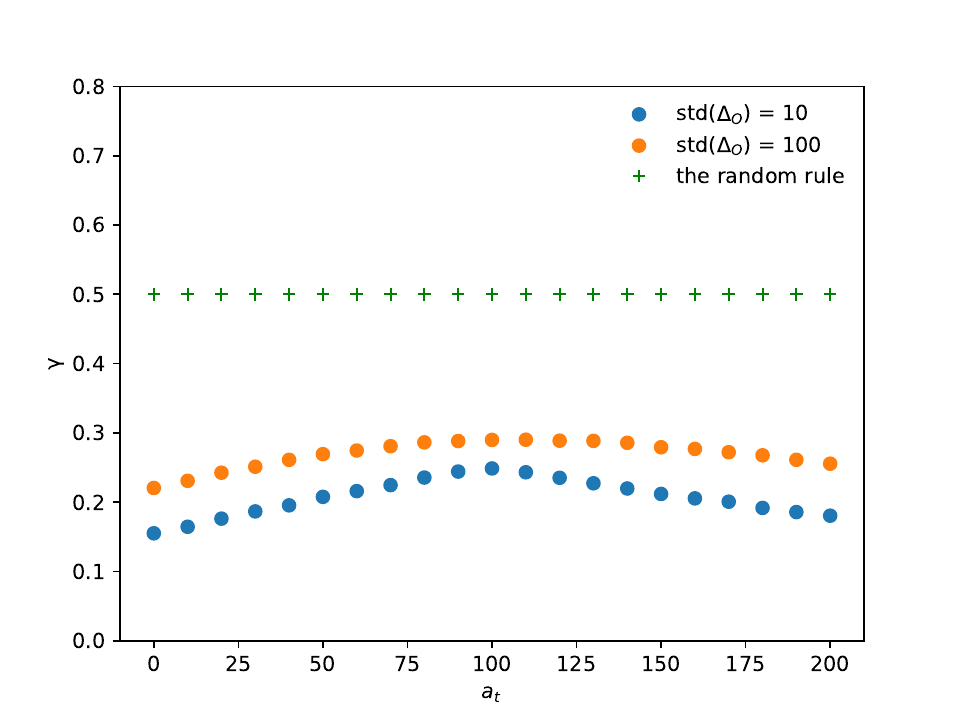}
    \caption{Simulation results for $\gamma$ when $\Delta_{O_i} = 200$ s.}
    \label{doi200}
\end{figure}

Figure~\ref{doi20} illustrates $\gamma$ when $\Delta_{O_i}$ was set to 20 s. $\gamma$ increased with the standard deviation of $\Delta_{O_i}$ ($ std(\Delta_{O_i})$) because, with a higher $ std(\Delta_{O_i})$, blocks generated by honest miners are less likely to satisfy Condition \ref{corecondition}. However, in all scenarios, the proposed method outperformed the random rule, demonstrating its superiority among fully local methods.

Figure~\ref{doi200} shows $\gamma$ when $\Delta_{O_i}$ was set to 200 s. By conservatively setting $\Delta_{O_i}$ to a larger value, the effectiveness of the proposed method decreased when $ std(\Delta_{O_i})$ was small. However, this setting helped maintain the effectiveness of the proposed method even when $ std(\Delta_{O_i})$ was large. Unlike $\Delta_B$, $\Delta_O$ is difficult to predict in actual networks. The simulation results suggest that when conservatively applying the proposed method, it is desirable to set $\Delta_{O_i}$ to a larger value, such as 200 s.

\section{Conclusion} 
We proposed a fully local last-generated rule. Our approach is based on a relative time standard. Through a theoretical analysis and simulation experiments, we confirmed that the proposed method functioned effectively as a fully local last-generated rule.

\bibliography{hoge} 
\bibliographystyle{IEEEtran.bst}

\end{document}